\newtheorem{observation}[lemma]{Observation}
\title{A Constant Approximation for Colorful $k$-Center}
\titlerunning{A Constant Approximation for Colorful $k$-Center}
\author{Sayan Bandyapadhyay}{Department of Computer Science, University of Iowa, \\{Iowa City, IA, USA.}}{sayan-bandyapadhyay@uiowa.edu}{}{}
\author{Tanmay Inamdar}{Department of Computer Science, University of Iowa, \\{Iowa City, IA, USA.}}{tanmay-inamdar@uiowa.edu}{}{}
\author{Shreyas Pai}{Department of Computer Science, University of Iowa, \\{Iowa City, IA, USA.}}{shreyas-pai@uiowa.edu}{}{}
\author{Kasturi Varadarajan}{Department of Computer Science, University of Iowa, \\{Iowa City, IA, USA.}}{kasturi-varadarajan@uiowa.edu}{}{}
\authorrunning{S. Bandyapadhyay, T. Inamdar, S. Pai and K. Varadarajan}
\keywords{Colorful $k$-center, Euclidean Plane, LP Rounding, Outliers}
\newcommand{\LP}{\textsf{LP}\xspace}
\newcommand{\NP}{\textsf{NP}}
\begin{document}
\maketitle
\begin{abstract}
	In this paper, we consider the colorful $k$-center problem, which is a generalization of the well-known $k$-center problem. Here, we are given red and blue points in a metric space, and a coverage requirement for each color. The goal is to find the smallest radius $\rho$, such that with $k$ balls of radius $\rho$, the desired number of points of each color can be covered. We obtain a constant approximation for this problem in the Euclidean plane. We obtain this result by combining a ``pseudo-approximation'' algorithm that works in any metric space, and an approximation algorithm that works for a special class of instances in the plane. The latter algorithm uses a novel connection to a certain matching problem in graphs.
\end{abstract}

\newpage
\pagenumbering{arabic}
\section{Introduction}

In the {\em $k$-center} problem, we are given a finite metric space $(P, d)$, where $P$ is a set of $n$ points, and $d: P \times P \rightarrow \mathbb{R}^+$ is the associated distance function. We are also given an integer $1 \le k \le n$. The goal is to find a subset $C \subseteq P$ of {\em centers}, where $|C| = k$, so as to minimize $ \max_{p \in P} \min_{c \in C} d(p, c)$, the maximum distance of a point from its nearest center. Geometrically, we want to find the smallest radius $\rho$ such that $P$ can be covered by $k$ balls of radius $\rho$ (centered at points in $P$). 

It is well known that $k$-center is \NP-hard; furthermore, it is also \NP-hard to approximate the optimal radius to within a $2-\epsilon$ factor, for any $\epsilon > 0$. This is easily seen via a reduction from the minimum dominating set problem \cite{hsu1979easy}. On the other hand, it is possible to obtain a tight approximation ratio of $2$ \cite{gonzalez1985clustering,hochbaumS1985best}. A simple greedy algorithm of \cite{gonzalez1985clustering} achieving this starts with $C$ containing an arbitrary point in $P$; in each of the subsequent $k-1$ iterations, it finds a point $p \in P$ that maximizes
$d(p, C) : = \min_{c \in C} d(p,c)$, and adds that point to $C$.

In the {\em $k$-center with outliers} problem, we are given an additional parameter $1 \le p \le n$. The goal is to find the smallest radius $\rho$ such that at least $p$ points of $P$ can be covered by $k$ balls of radius $\rho$. Thus, we allow up to $n - p$ points to remain uncovered, and these would be ``outliers.'' Intuitively, in comparison to $k$-center, an algorithm for solving this problem has to also figure out which $p$ points to cover. Nevertheless, Charikar et al.\ \cite{charikar2001algorithms}, who introduced this problem, gave a simple $3$-approximation using a greedy algorithm. Their algorithm guesses the optimal radius $\rho$; then, for $k$ iterations, it finds the ball of radius $\rho$ that covers the maximum number of yet uncovered points, expands it by a factor of $3$, and adds it to the solution. More recently, the approximation guarantee was improved to $2$, using \LP-rounding techniques \cite{chakrabarty2016non,harris2017lottery}. Note that this approximation guarantee is tight, in light of the $2-\epsilon$ hardness result mentioned earlier.

We now introduce the {\em colorful $k$-center} problem, which is studied in this article. As in $k$-center, the input consists of a set $P$ of $n$ points in a metric space, and an integer $k$. Furthermore, we are given a partition $\{P_1, P_2, \ldots, P_c \}$ of $P$ into $c$ {\em color classes}, and a coverage requirement $0 \leq t_i \leq |P_i|$ for each color class $1 \leq i \leq c$. The goal is to find the smallest radius $\rho$ such that using $k$ balls of radius $\rho$, centered at points of $P$, we can simultaneously cover at least $t_i$ points from each class $P_i$. When we have only one color class, that is, when $c = 1$, we obtain the $k$-center with outliers problem. In much of this article, we focus on the case with two color classes, where $c = 2$. In this case, we call the colors {\em red }and {\em blue}; we denote $P_1$ and $P_2$, respectively, by $R$ (red points) and $B$ (blue points), and denote the coverage requirements $t_1$ and $t_2$, respectively, by $r$ and $b$. The motivation for studying
this problem is as follows. Each color class can be thought of as a certain demographic with a specific coverage requirement, which must be satisfied by the $k$ balls chosen in the solution.

Even with two colors, the colorful $k$-center is quite challenging. The greedy algorithm for $k$-center with outliers \cite{charikar2001algorithms} has no obvious generalization: with two color classes, what do we optimize when adding the next ball to our solution? The \LP-based approaches for $k$-center with outliers do not generalize either -- as we point out (in Example \ref{ex:fractional}), the natural \LP has an unbounded integrality gap.

Chakrabarty et al.~\cite{chakrabarty2016non} study a closely related problem called non-uniform $k$ center. As input to this problem, we are given a set $P$ of points in a metric space, $\lambda$ distinct radii $r_1 > \cdots > r_\lambda \geq 0$ and corresponding integers $t_1 , \ldots, t_\lambda$. The goal is to find the smallest {\em dilation} $\beta \geq 0$ such that $P$ can be covered by a collection of balls formed by including, for each $1 \leq i \leq \lambda$, $t_i$ balls of radius $\beta \cdot r_i$. When $\lambda = 1$, we get the regular $k$-center problem. When $\lambda$ is unbounded, the problem is hard to approximate to within any constant factor; when $\lambda = 2$, one can get an $O(1)$-approximation; and when $\lambda$ is a constant greater than $2$, it is open as to whether an $O(1)$-approximation is possible \cite{chakrabarty2016non}.

As Chakrabarty et al.~\cite{chakrabarty2016non} observe, there is a close relationship between $k$-center with outliers and non-uniform $k$-center with $\lambda = 2$. In fact, it can be shown that the two problems are equivalent up to an $O(1)$-approximation factor. While the relationship between the colorful $k$-center with $c$ color classes and non-uniform $k$-center with $c + 1$ distinct radii is not known to be as close for $c \geq 2$, the study of the latter problem is one motivation that led us to the colorful $k$-center problem.


\subsection{Other Related Work}
The $k$-means and $k$-median are classic \NP-hard clustering problems that are closely related to the $k$-center problem. Like the $k$-center problem, these problems have been extensively studied, resulting in different approaches guaranteeing constant factor approximations. More recently, the outlier versions of these problems were also studied; constant factor approximations were obtained for $k$-median with outliers \cite{chen2008constant,krishnaswamy2018constant} and $k$-means with outliers \cite{krishnaswamy2018constant}. A polynomial time bicriteria $(1+\epsilon)$-approximation using at most $k(1+\epsilon)$ centers for any $\epsilon > 0$ is known in low dimensional Euclidean spaces, and metric spaces with constant doubling dimension \cite{friggstad2019approximation}.

Facility location with outliers, which is referred to as {\em Robust Facility Location}, is a generalization of the uncapacitated Facility Location problem; various constant approximations are known for the latter problem. The Robust Facility Location problem was introduced in \cite{charikar2001algorithms}, who gave a $3$-approximation. The approximation guarantee was later improved by Jain et al. \cite{jain2003greedy} to $2$.

A colorful version of vertex cover is studied in \cite{bera2014approximation}, and colorful versions of the Set Cover and Facility Location-type problems were considered in \cite{inamdar2018partition}. In these problems, the cardinality of the cover (or its weight) shows up in the objective function, unlike in $k$-center, where the number of centers/balls $k$ is a ``hard restriction''. These problems therefore have a different flavor.


Finally, $k$-center and $k$-median have been generalized in an orthogonal direction, where there are additional constraints on the centers \cite{hajiaghayi2010budgeted,chen2016matroid,chakrabarty2018generalized}. Again, the issues studied in these generalizations tend to be quite different from the ones we confront here.  


\subsection{Our Results}
We study the colorful $k$-center problem when the number of colors is a constant. Our main result is a randomized polynomial time algorithm that, with high probability, outputs an $O(1)$-approximation in the Euclidean plane. (As $k$-center is APX-hard even in the plane \cite{feder1988optimal}, we cannot hope for a PTAS.)

To describe our approach, we focus on the case with two colors. We first design a pseudo-approximation algorithm that outputs a $2$-approximation, but with $k + 1$ centers instead of $k$. This result holds in any metric space, not just the Euclidean plane. To obtain it, we preprocess the solution to a natural \LP-relaxation into a solution for a simpler \LP, using ideas from \cite{chakrabarty2016non,harris2017lottery}. We then note that a basic feasible solution to the simple \LP opens at most $k + 1$ centers fractionally. A pseudo-approximation for $k$-median with outliers was an important step in the recent work of Krishnaswamy et al. \cite{krishnaswamy2018constant}.

The pseudo-approximation allows us to reduce the colorful $k$-center problem to a special case where the balls in the optimal solution are separated -- the distance between any two balls is much greater than their radii. Designing an $O(1)$-approximation for this special case is challenging, even in the plane. For instance, partitioning into small subproblems by using a grid (such as in \cite{hochbaumM1985approximation}) or other type of object does not work, because any such partitioning may intersect balls in the optimal solution, and we have a hard bound $k$ on the number of balls allowed.

We solve separated instances in the plane by reducing to exact perfect matching on graphs. In this problem, we are given a graph in which each edge has a red weight and a blue weight, both non-negative integers. Given integers $w_r$ and $w_b$, the goal is to determine if the graph has a perfect matching whose red and blue weights are, respectively, {\em exactly} $w_r$ and $w_b$. This problem can be solved in randomized polynomial time, provided the weights are bounded by a polynomial in the input size; see  \cite{camerini1992random,mulmuley1987matching}. To our knowledge, this connection of geometric clustering and covering to exact matching is a novel one. 

In its current form, the reduction from separated instances of colorful $k$-center to exact matching does not work even in $\mathbb{R}^3$. Nevertheless, we are hopeful that this work will lead to an $O(1)$-approximation to colorful $k$-center in dimensions $3$ and higher, and indeed in any metric space.

\textbf{Organization.} We describe our pseudo-approximation in Section~\ref{sec:pseudo-appx} and our $O(1)$-approximation for the Euclidean plane in Section~\ref{sec:geometric}. In both these sections, we focus on the case where the number of colors is $2$. We address the extension of the planar result to multiple color classes in Section~\ref{sec:multiple}. 

\section{Pseudo-approximation via \LP Rounding}
\label{sec:pseudo-appx}

Recall that an instance of colorful $k$-center is a metric space $(P = R \sqcup B, d)$, where $R, B$ are non-empty, disjoint sets of red and blue points respectively. We are also given red and blue coverage requirements $1 \le r \le |R|$ and $1 \le b \le |B|$, respectively. A solution $(C, D)$, where $C, D \subseteq P$, is said to be feasible, if (i) $|C| \le k$, (ii) $|D \cap R| \ge r$, and (iii) $|D \cap B| \ge b$. The cost of a solution is defined as $\max_{j \in D} d(j, C)$. Any point in the set $D$ is said to be \emph{covered} by the solution. The goal of the colorful $k$-center problem is to find a feasible solution of the minimum cost.

In this section, we describe a pseudo-approximation algorithm for the colorful $k$-center problem. That is, we show how to find a solution of cost at most $2 \cdot OPT$ using at most $k+1$ centers, where $OPT$ is the cost of an optimal solution. This result is achieved in two steps. In the first step, we use the natural \LP relaxation for the decision version of the colorful $k$-center problem, to partition the points in $P$ into disjoint clusters using a simple greedy procedure. We also obtain a related fractional solution during this clustering procedure. We use this to show that a much simpler \LP with a small number of constraints has a feasible solution, even though there may not exist an integral feasible solution using at most $k$ centers. Nevertheless, we use the simplicity of the \LP to show that there exists a solution using at most $k+1$ centers. 

\begin{figure}
\begin{mdframed}[backgroundcolor=gray!9]
	Find a feasible solution $(x, z)$ such that:
	\begin{alignat}{3}
	\text{} \displaystyle\sum\limits_{i \in B(j, \rho)}   x_{i} &\geq z_j,  \quad &  \forall j \in P \label[constr]{constr:cover-ej}\\
	\displaystyle\sum_{i \in P}x_i &\le k, &  \label[constr]{constr:atmost-k}\\
	\displaystyle\sum_{j \in R}z_j &\ge r, & \label[constr]{constr:red-coverage}\\
	\displaystyle\sum_{j \in B}z_j &\ge b, & \label[constr]{constr:blue-coverage}\\
	\displaystyle z_j, x_i &\in [0, 1], \quad & \forall i, j \in P \label[constr]{constr:fractional}
	\end{alignat}
\end{mdframed}
\caption{The feasibility \LP, parameterized by $\rho$ \label{fig:FeasibilityLP}}
\end{figure}

Let $\rho$ denote our ``guess'' for the optimal cost. Note that the optimal cost must be one of the $O(n^2)$ interpoint distances, therefore there are $O(n^2)$ choices for $\rho$. We state the feasibility \LP, parameterized by $\rho$, in Figure \ref{fig:FeasibilityLP}.
It is easy to see that an optimal solution satisfies all the constraints when the guess $\rho$ is correct (i.e., when $\rho \ge OPT$). Therefore, henceforth, we assume that $\rho = OPT$. 

Now, we find a feasible fractional solution $(x', z')$ for this \LP, and use it to show that a related, but a much simpler \LP is feasible. To this end, we use the following ``greedy clustering'' procedure (see Algorithm \ref{alg:clustering}), which also computes a modified \LP solution $(\tilde{x}, \tilde{z})$. Let $P'$ denote the set of unclustered points, initialized to $P$. We also initialize $S$, the collection of cluster-centers, to the empty set. In each iteration, we find a point $j \in P'$ with the maximum $z_j$. We set $\tilde{z}_j = \tilde{x}_j \gets \min \{1, \sum_{i \in B(j, \rho)} x'_i \}$ -- note that the sum is over all points in the ball $B(j, \rho)$, as opposed to the points in $B(j, \rho) \cap P'$. Let $C_j$ denote the set of unclustered points within distance $2\rho$ from $j$. We refer to the set $C_j$ as a cluster. We set $\tilde{z}_{j'} \gets \tilde{z}_j$ for all other points $j' \in C_j \setminus \{j\}$. Finally, we remove the points in $C_j$ from $P'$ and repeat this process until all points are clustered, i.e., $P'$ becomes empty.

\begin{algorithm} 
	\caption{Clustering Algorithm} \label{alg:clustering}
	\begin{algorithmic}[1] 
		\State $S \gets \emptyset$, \quad $P' \gets P$
		\While{$P' \neq \emptyset$}
		\State $j \in P'$ be a point with maximum $z'_j$; let $S \gets S \cup \{j\}$
		\State $\tilde{x}_j \gets \min\{1, \sum_{i \in B(j, \rho)} x'_i\}$; $\tilde{z}_j \gets \tilde{x}_j$
		\State $C_j \gets B(j, 2\rho) \cap P'$
		\State For all $j' \neq j \in C_j$, set $\tilde{x}_{j'} \gets 0, \tilde{z}_{j'} \gets \tilde{z}_j$
		\State $P' \gets P' \setminus C_j$
		\EndWhile
	\end{algorithmic}
\end{algorithm}

For any point $i \in S$, let $R_i \coloneqq R \cap C_i$ and $B_i \coloneqq B \cap C_i$ denote the sets of red and blue points in the ``cluster'' $C_i$ respectively. Additionally, for any $i \in S$, let $r_i$ and $b_i$ denote the sizes of the sets $R_i$ and $B_i$ respectively. We start with a few simple observations that are immediate from the description of the procedure.

\begin{observation}\label{obs:clustering}\
	\begin{enumerate}
		\item The ``clusters'' $\{C_i\}_{i \in S}$ partition the point set $P$. Therefore, $\{R_i\}_{i \in S}$ partition the red points $R$, and $\{B_i\}_{i \in S}$ partition the blue points $B$,
		\item For any two distinct $i, i' \in S$, $d(i, i') > 2\rho$,
		\item For any $j \in P$, there is at most one $i \in S$, such that $d(i, j) \le \rho$.
	\end{enumerate}
\end{observation}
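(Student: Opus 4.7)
The plan is to verify each part directly from the structure of Algorithm~\ref{alg:clustering}, since all three statements are immediate structural consequences of the greedy clustering rule. Throughout, I will use the loop invariant that $P'$ shrinks monotonically, and that a point $i$ is placed in $S$ only if $i \in P'$ at that iteration.

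For part (1), I would observe that each iteration selects a new cluster center $j \in P'$, forms $C_j = B(j, 2\rho) \cap P'$, and removes \emph{exactly} the points of $C_j$ from $P'$. Because the clusters are intersected with the current $P'$ (not with $P$), no point is ever placed in two clusters, so the $C_i$ are pairwise disjoint. Moreover, the loop terminates only when $P' = \emptyset$, and at that stage every point originally in $P$ has been removed at some iteration — hence placed in some cluster — giving coverage. Therefore $\{C_i\}_{i \in S}$ is a partition of $P$, and restricting to red and blue yields the corresponding partitions of $R$ and $B$.

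For part (2), suppose $i$ is added to $S$ before $i'$ (without loss of generality). At the iteration where $i$ was chosen, we set $C_i = B(i, 2\rho) \cap P'$, and $i' \in P'$ at that moment (since $i' \in S$ is chosen later, so in particular $i'$ has not been removed yet). If $d(i, i') \le 2\rho$, then $i' \in B(i, 2\rho) \cap P' = C_i$, and $i'$ would be removed from $P'$ in this iteration, contradicting the fact that $i'$ is later selected as a cluster center in $S$. Hence $d(i, i') > 2\rho$.

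Part (3) is then an immediate corollary of part (2) via the triangle inequality: if some $j \in P$ satisfied $d(i, j) \le \rho$ and $d(i', j) \le \rho$ for distinct $i, i' \in S$, then $d(i, i') \le d(i, j) + d(j, i') \le 2\rho$, contradicting part (2). None of these arguments present a real obstacle; the only thing to be careful about is the distinction between $P'$ (current unclustered points) and $P$ (all points) when forming $C_j$, which is precisely what makes the cluster disjointness argument clean.
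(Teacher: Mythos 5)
Your proof is correct, and it fills in exactly the straightforward reasoning the paper considers "immediate from the description of the procedure" (the paper provides no proof of this observation). All three parts — disjointness plus coverage from monotone removal of $P'$, the $2\rho$-separation of cluster centers from the greedy removal rule, and the triangle-inequality corollary — are the natural arguments and are carried out without gaps.
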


The following observation follows from the greedy nature of the clustering procedure.
\begin{observation} \label{claim:z-preservation}
	For any point $j_1 \in P$, let $j \in S$ be the point in $S$ such that $j_1 \in C_j$. Then, $\tilde{z}_j = \tilde{z}_{j_1} \ge z'_{j_1}$
\end{observation}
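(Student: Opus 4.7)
The plan is to split the statement into its two parts and dispatch them in order.

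First, the equality $\tilde{z}_j = \tilde{z}_{j_1}$ is essentially built into the clustering procedure. Either $j_1 = j$, in which case it is trivial, or $j_1 \in C_j \setminus \{j\}$, in which case line 6 of Algorithm \ref{alg:clustering} explicitly assigns $\tilde{z}_{j_1} \gets \tilde{z}_j$ when the cluster centered at $j$ is formed. So nothing needs to be argued beyond pointing to the algorithm.

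For the inequality $\tilde{z}_j \ge z'_{j_1}$, I would track the iteration in which $j$ is added to $S$. At the start of that iteration, $j_1$ must still lie in $P'$ (otherwise $j_1$ would already belong to some earlier cluster $C_{j^*}$ with $j^* \ne j$, contradicting the fact that $\{C_i\}_{i \in S}$ partition $P$ by Observation \ref{obs:clustering}). Since the algorithm picks $j$ to be the point of $P'$ of maximum $z'$-value, we obtain $z'_j \ge z'_{j_1}$.

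Next I would plug this into the definition of $\tilde{z}_j$. By line 4, $\tilde{z}_j = \tilde{x}_j = \min\{1, \sum_{i \in B(j,\rho)} x'_i\}$. By LP \cref{constr:cover-ej} applied at $j$, $\sum_{i \in B(j,\rho)} x'_i \ge z'_j$, and by \cref{constr:fractional}, $z'_j \le 1$. Hence
\[
\tilde{z}_j \;=\; \min\!\left\{1,\, \sum_{i \in B(j,\rho)} x'_i\right\} \;\ge\; \min\{1,\, z'_j\} \;=\; z'_j \;\ge\; z'_{j_1},
\]
which, combined with the first equality, yields the claim.

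There is no real obstacle here; the proof is essentially a consequence of how the greedy selection interacts with LP \cref{constr:cover-ej}. The only mild subtlety is justifying that $j_1 \in P'$ at the time $j$ is selected, which is handled by invoking the partition property from Observation \ref{obs:clustering}.
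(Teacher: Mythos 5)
Your proof is correct and follows essentially the same route as the paper's: the equality is read off from the algorithm, the greedy selection gives $z'_j \ge z'_{j_1}$, and the LP constraints yield $\tilde z_j \ge z'_j$. The only cosmetic difference is that you fold the paper's two-case analysis (on whether the $\min$ saturates at $1$) into a single line via monotonicity of $\min$, which is a slight streamlining but not a different argument.
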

\begin{proof}
	Notice that, $\tilde{z}_{j_1} = \tilde{z}_j = \min\{1, \sum_{i \in B(j, \rho)} x'_{i} \}$. There are two cases to consider.
	
	$\tilde{z}_{j_1} = \tilde{z}_j = 1 \ge z'_{j_1}$, where the inequality follows from the constraint (\ref{constr:fractional}) of the \LP. 
	
	Otherwise, $\tilde{z}_j = \sum_{i \in B(j, \rho)} x'_i$. In this case, $\tilde{z}_{j} = \sum_{i \in B(j, \rho)} x'_i \ge z'_j \ge z'_{j_1}$. Here, the first inequality follows from constraint (\ref{constr:cover-ej}) of the \LP, and the second inequality follows from the fact that in the iteration when $j_1$ was removed from $P'$, $j \in P'$ was chosen to be a point with the maximum $z'$-value in line 3.
\end{proof}

The next two claims help us construct a feasible solution to a simplified \LP, to be introduced later.

\begin{claim}\label{claim:rb-coverage}\
	\begin{enumerate}
		\item $\sum_{i \in S} r_i \tilde{x}_i \ge r$
		\item $\sum_{i \in S} b_i \tilde{x}_i \ge b$
	\end{enumerate}
\end{claim}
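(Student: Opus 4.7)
The plan is to prove both inequalities in parallel, since the argument is symmetric in red and blue. The key observation is that the clustering procedure partitions $P$ into clusters $\{C_i\}_{i \in S}$, and within each cluster all points share a common $\tilde{z}$ value, namely $\tilde{z}_j = \tilde{z}_i = \tilde{x}_i$ for every $j \in C_i$. Combined with the pointwise lower bound $\tilde{z}_j \ge z'_j$ from Observation~\ref{claim:z-preservation}, this should reduce the claim to the original LP coverage constraints.

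Concretely, for the red case I would compute
\begin{align*}
\sum_{i \in S} r_i \tilde{x}_i \;=\; \sum_{i \in S} \sum_{j \in R_i} \tilde{z}_i \;=\; \sum_{i \in S} \sum_{j \in R_i} \tilde{z}_j \;=\; \sum_{j \in R} \tilde{z}_j,
\end{align*}
where the first equality uses $r_i = |R_i|$ together with the fact that the $\tilde{x}_i$ value is pulled out of the inner sum, the second uses that $\tilde{z}_j = \tilde{z}_i$ for every $j \in C_i \supseteq R_i$ (set in line 6 of Algorithm~\ref{alg:clustering}), and the third uses that $\{R_i\}_{i \in S}$ partitions $R$ by Observation~\ref{obs:clustering}(1). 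Then, applying Observation~\ref{claim:z-preservation} pointwise and \cref{constr:red-coverage} of the feasibility LP for $(x',z')$, I get
\begin{align*}
\sum_{j \in R} \tilde{z}_j \;\ge\; \sum_{j \in R} z'_j \;\ge\; r.
\end{align*}
The blue case is identical, using \cref{constr:blue-coverage} and the partition $\{B_i\}_{i \in S}$ of $B$.

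There is essentially no obstacle: the lemma is a bookkeeping consequence of the two facts already established about the clustering, namely that $\tilde{z}$ is constant on clusters and dominates $z'$ pointwise. The only minor subtlety worth flagging is that in the definition $\tilde{x}_i = \min\{1, \sum_{i' \in B(i,\rho)} x'_{i'}\}$ the sum ranges over all of $P$ (not just $P'$), which is precisely what makes the domination $\tilde{z}_i \ge z'_{j_1}$ hold for every $j_1 \in C_i$ regardless of the order in which points are clustered; I would keep this in mind when invoking Observation~\ref{claim:z-preservation}, but no further work is required.
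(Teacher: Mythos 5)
Your proof is correct and follows essentially the same route as the paper's: expand $r_i\tilde{x}_i$ as a sum of $\tilde{z}$ over the red points in cluster $C_i$, use the partition $\{R_i\}$ of $R$ to collapse the double sum to $\sum_{j\in R}\tilde z_j$, then invoke Observation~\ref{claim:z-preservation} and \cref{constr:red-coverage}. The only cosmetic difference is that you collapse the double sum before applying the domination $\tilde z \ge z'$, whereas the paper applies it inside the double sum; the steps are interchangeable.
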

\begin{proof}
	We prove the first part of the claim -- the second part is analogous.
	\begin{align*}
	\sum_{i \in S} r_i \tilde{x}_i &= \sum_{i \in S} |R_i| \cdot \tilde{x}_i
	\\&= \sum_{i \in S} \sum_{j' \in R_i} \tilde{z}_i \tag{For any $i \in S, \tilde{x}_i = \tilde{z}_i$ by construction}
	\\&\ge \sum_{i \in S} \sum_{j' \in R_i} z'_{j'} \tag{For any $j' \in R_i \subseteq C_i, \tilde{z}_i \ge z'_{j'}$ from Observation \ref{claim:z-preservation}}
	\\&= \sum_{j \in R} z'_j \tag{Property 1 of Observation \ref{obs:clustering}}
	\\&\ge r \tag{By constraint (\ref{constr:red-coverage})}
	\end{align*}
\end{proof}
\begin{claim}\label{claim:fractional-k}
	$\sum_{i \in S} \tilde{x}_i \le k$
\end{claim}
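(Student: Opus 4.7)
The plan is to bound $\sum_{i \in S} \tilde{x}_i$ by first relaxing each $\tilde{x}_i$ via the $\min$ in its definition, then exploiting the pairwise separation between cluster centers (Observation~\ref{obs:clustering}, Property 2) to argue that the balls $B(i,\rho)$ for $i \in S$ are pairwise disjoint, and finally invoking constraint~(\ref{constr:atmost-k}).

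Concretely, I would first write, for each $i \in S$,
\[
\tilde{x}_i \;=\; \min\!\left\{1, \sum_{j \in B(i,\rho)} x'_j \right\} \;\le\; \sum_{j \in B(i,\rho)} x'_j ,
\]
which is immediate from the definition of $\tilde{x}_i$ in line 4 of Algorithm~\ref{alg:clustering}. Summing over $i \in S$ yields
\[
\sum_{i \in S} \tilde{x}_i \;\le\; \sum_{i \in S} \sum_{j \in B(i,\rho)} x'_j .
\]

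The main step is to show that the balls $\{B(i,\rho)\}_{i \in S}$ are pairwise disjoint, so that each $j \in P$ contributes to the double sum at most once. This follows from Property~2 of Observation~\ref{obs:clustering}: for any two distinct $i, i' \in S$, $d(i,i') > 2\rho$, so if some $j$ lay in both $B(i,\rho)$ and $B(i',\rho)$, the triangle inequality would give $d(i,i') \le d(i,j) + d(j,i') \le 2\rho$, a contradiction. (This is essentially Property~3 restricted to the centers in $S$.) Consequently,
\[
\sum_{i \in S}\sum_{j \in B(i,\rho)} x'_j \;\le\; \sum_{j \in P} x'_j \;\le\; k ,
\]
where the last inequality is constraint~(\ref{constr:atmost-k}) applied to the feasible fractional solution $(x', z')$. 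Chaining the inequalities gives $\sum_{i \in S} \tilde{x}_i \le k$, as desired.

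I do not anticipate any serious obstacle here: the entire argument rests on the disjointness of the balls of radius $\rho$ around the cluster centers, which is an immediate consequence of the greedy selection rule (points are picked into $S$ only when they are unclustered, i.e., at distance $> 2\rho$ from every previously chosen center). The only thing to be mildly careful about is to use the upper bound from the $\min$ rather than the $\sum$ itself, but this loses nothing.
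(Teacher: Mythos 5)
Your proof is correct and matches the paper's argument essentially step for step: bound $\tilde{x}_i$ by dropping the $\min$, use disjointness of the balls $B(i,\rho)$ for $i \in S$ to collapse the double sum to a sum over $P$, and apply constraint~(\ref{constr:atmost-k}). The only cosmetic difference is that you re-derive Property~3 of Observation~\ref{obs:clustering} from Property~2, whereas the paper cites Property~3 directly.
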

\begin{proof}
	\begin{align*}
	\sum_{i \in S} \tilde{x}_i &\le \sum_{i \in S} \sum_{i' \in B(i, \rho)} x'_{i'} \tag{$\tilde{x}_i \le \sum_{i' \in B(i, \rho)} x'_{i'}$}
	\\&\le \sum_{i' \in P} x'_{i'} \tag{From Property 3 of Observation \ref{obs:clustering}}
	\\&\le k \tag{By constraint (\ref{constr:atmost-k})}
	\end{align*}
\end{proof}

\subsection{A Simplified Problem}
Recall that the clusters $\{C_j\}_{j \in S}$ are disjoint, and have radius $2\rho$. Now, if we can find a collection of $k$ clusters from this set, that cover at least $r$ red points and $b$ blue points, then this immediately leads to a $2$-approximation. Unfortunately, it is not always possible to find such a collection. However, in the following, we show that we can find a collection of $k+1$ clusters that satisfies the coverage requirements of both colors. The \LP in Figure \ref{fig:SimplifiedLP} is a relaxation of the problem of finding at most $k$ clusters that satisfy the coverage requirements of both colors.

\begin{figure}
\begin{mdframed}[backgroundcolor=gray!9]
	\begin{alignat}{3}
	\text{maximize}\quad \displaystyle&\sum\limits_{i \in S} r_{i}x_{i} & \nonumber \\
	\text{subject to}\quad \displaystyle&\sum\limits_{i \in S} b_i x_{i} \geq b \label[constr]{constr:blue-cover1}\\
	&\displaystyle\sum_{i \in S}\ x_i\  \le k, \label[constr]{constr:atmost-k1}\\
	\displaystyle & \qquad x_i \in [0, 1], \quad & \forall i \in S \label[constr]{constr:fractional-x}
	\end{alignat}
\end{mdframed}
\caption{The Simplified \LP \label{fig:SimplifiedLP}}
\end{figure}

Note that Claims \ref{claim:rb-coverage} and \ref{claim:fractional-k} imply that the fractional solution $(\tilde{x})$ constructed above is a feasible solution for this \LP, and has objective value at least $r$. However, there may not exist a feasible integral solution that uses at most $k$ clusters from the set $\{C_i\}_{i \in S}$. The following example illustrates this.

\begin{example} \label{ex:fractional}
	In the following figure, we have two clusters $C_1$ and $C_2$. Red points are shown as boxes, whereas blue points are shown as dots. $C_1$ consists of $3$ red points and $1$ blue point, whereas $C_2$ contains $3$ blue points and $1$ red point. Suppose that $k = 1$ and the coverage requirements of each color class is $2$. Now, assigning $x_1 = x_2 = 0.5$ yields a fractional solution that satisfies the coverage requirements of each color class. However, assuming that the distance between two clusters is very large compared to the radii, $2\rho$, it can be seen that there is no feasible integral solution of cost at most a constant multiple of $2\rho$.
	\begin{figure}[H]
		\centering
		\includegraphics[scale=1]{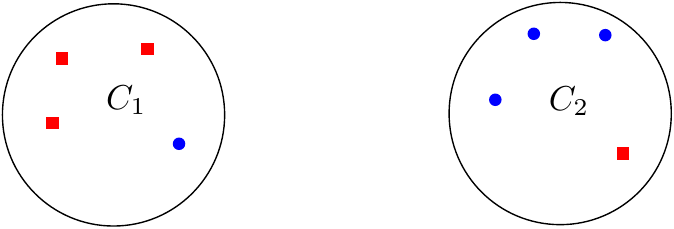}
	\end{figure}
\end{example}

Nevertheless, we show in the following that there exists a feasible solution that uses at most $k+1$ clusters. First, we need the following classical result from linear algebra (see Lemma 2.1.4 in \cite{lau2011iterative}).

\begin{lemma}[\cite{lau2011iterative}] \label{lemma:ranklemma}
	In any extreme point feasible solution (or equivalently, a basic feasible solution) to a linear program, the number of linearly independent tight constraints is equal to the number of variables.
\end{lemma}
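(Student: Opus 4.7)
The plan is to establish the stated equivalence between basic feasible solutions and the full-rank tightness condition via a perturbation argument in both directions. Recall that a basic feasible solution (equivalently, an extreme point) is a feasible point $x^* \in \mathbb{R}^n$ that cannot be written as $\tfrac{1}{2}(y + z)$ for any two distinct feasible points $y, z$. I would prove both directions of the equivalence separately, first the nontrivial forward implication and then the easier converse.

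For the forward direction, suppose $x^*$ is a basic feasible solution and, for contradiction, that the matrix $A_T$ formed by the rows of the tight constraints has rank strictly less than $n$. Then its kernel is nontrivial, so there exists some $d \neq 0$ with $A_T d = 0$. Consider the two perturbations $x^* \pm \epsilon d$. Both continue to satisfy every tight constraint with equality, since $A_T d = 0$. For each non-tight inequality constraint the slack at $x^*$ is strictly positive, so by continuity one can choose $\epsilon > 0$ small enough so that both perturbations remain feasible for all (finitely many) non-tight inequalities as well. This produces two distinct feasible points whose midpoint is $x^*$, contradicting the assumption that $x^*$ is extreme.

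For the reverse direction, suppose $A_T$ has $n$ linearly independent rows, so that $x^*$ is the unique solution of the square system $A_T x = b_T$. If $x^* = \tfrac{1}{2}(y + z)$ for feasible $y, z$, then for each tight inequality constraint $a^\top x \le \beta$ one has $a^\top y \le \beta$ and $a^\top z \le \beta$, and since their average equals $\beta$ both must be tight; equality constraints are handled trivially. Hence $y$ and $z$ both satisfy the full-rank system $A_T x = b_T$, forcing $y = z = x^*$, so $x^*$ is extreme.

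The only real obstacle is the forward direction, where one must choose the perturbation small enough to preserve the strictly-satisfied inequality constraints; the finiteness of the constraint system makes this routine. Everything else reduces to standard linear algebra: the existence of a kernel vector when $A_T$ is rank-deficient, and the uniqueness of solutions to a square nonsingular linear system.
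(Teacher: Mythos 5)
Your proof is correct and is the standard textbook argument: the perturbation $x^* \pm \epsilon d$ along a kernel direction of the tight-constraint matrix for the forward direction, and the uniqueness of the solution to a full-rank square system for the converse. Note, however, that the paper does not prove this lemma at all—it is quoted verbatim as a classical fact, with a pointer to Lemma 2.1.4 of Lau, Ravi, and Singh's book—so there is no "paper proof" to compare against; your argument matches what that reference (and essentially every LP text) gives. One small point of scope: the lemma as stated only asserts the forward implication (the "or equivalently, a basic feasible solution" is terminological, identifying two names for the same object), so your converse direction is extra but harmless, and in fact useful if one wants to justify the synonymy itself.
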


Furthermore, an extreme point optimal solution can be computed in polynomial time. Now, we find such an optimal solution $(x^*)$ to the simplified \LP. It follows from Claim \ref{claim:rb-coverage} that its objective value is at least $r$. We prove the following lemma, which is a simple consequence of Lemma \ref{lemma:ranklemma}.

\begin{lemma} \label{lemma:pseudo-approx}
	The number of $x^*$-variables in an extreme point optimal solution that are strictly fractional, is at most $2$. Therefore, the number of strictly positive variables is at most $k+1$.
\end{lemma}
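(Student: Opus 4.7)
The plan is to apply Lemma \ref{lemma:ranklemma} directly to the simplified \LP in Figure \ref{fig:SimplifiedLP}. The key observation is that this \LP has exactly $|S|$ variables, but only two non-bound constraints (the blue coverage constraint \ref{constr:blue-cover1} and the cardinality constraint \ref{constr:atmost-k1}); all remaining constraints are the box constraints $0 \le x_i \le 1$. So the number of fractional variables in an extreme point is limited by how many tight constraints must come from the bounds.

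Formally, let $(x^*)$ be an extreme point optimal solution and let $F \subseteq S$ be the set of indices with $0 < x_i^* < 1$. For $i \in F$, neither $x_i \ge 0$ nor $x_i \le 1$ is tight, so none of the $2|F|$ bound constraints associated with $F$ contribute to the set of tight constraints. Hence the total number of tight constraints is at most $(|S| - |F|) + 2$, coming from the at most $|S| - |F|$ tight bound constraints on variables outside $F$ plus the at most $2$ non-bound constraints. By Lemma \ref{lemma:ranklemma}, the number of linearly independent tight constraints equals the number of variables $|S|$, so $(|S| - |F|) + 2 \ge |S|$, giving $|F| \le 2$. This proves the first part of the statement.

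For the second part, let $m$ denote the number of indices with $x_i^* = 1$ and let $f = |F| \le 2$. The number of strictly positive variables is $m + f$, and the cardinality constraint gives $m + \sum_{i \in F} x_i^* \le k$. When $f \le 1$, the sum over $F$ is in $[0,1)$, so $m \le k$ and the total is at most $k+1$. When $f = 2$, write the two fractional values as $\alpha, \beta \in (0,1)$; then $m \le k - \alpha - \beta$, and since $m$ is an integer with $\alpha + \beta > 0$, we get $m \le k - 1$, so $m + f \le k + 1$. In either case the total number of strictly positive variables is at most $k+1$, completing the proof.

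The argument is mostly bookkeeping once Lemma \ref{lemma:ranklemma} is invoked; the only subtlety worth writing out carefully is the last integrality step that rules out the potential off-by-one (i.e., ensuring we do not end up with $k+2$ positive variables when $f=2$). No step here appears to be a serious obstacle — the simplified \LP was precisely designed so that the rank lemma gives the desired structural bound.
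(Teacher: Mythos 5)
Your proof is correct and takes essentially the same route as the paper: both invoke the rank lemma on the simplified \LP{}, count tight constraints to bound the number of fractional variables by $2$, and then do a short case analysis on the cardinality constraint to conclude that at most $k+1$ variables are positive. The only cosmetic difference is in the bookkeeping for the second part — the paper splits on whether $k$ variables equal $1$, while you split on the number of fractional variables — but both reduce to the same integrality observation.
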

\begin{proof}
	Let $m \coloneqq |S|$ denote the number of variables. From Lemma \ref{lemma:ranklemma}, it follows that the number of linearly independent tight constraints is equal to $m$. Note that, even though there are $2m+2$ constraints in the \LP, at most $m+2$ constraints can be simultaneously tight. Now, even if constraints (\ref{constr:blue-cover1}) and (\ref{constr:atmost-k1}) are tight, it follows that number of tight constraints from $(\ref{constr:fractional-x})$ is at least $m-2$. That is, the number of strictly fractional variables is at most $2$.
	
	If there are $k$ variables that are equal to $1$, then constraint (\ref{constr:atmost-k1}) is tight, and there are no strictly fractional variables. Otherwise, the number of variables equal to $1$, is at most $k-1$. Along with at most $2$ strictly fractional variables, the number of positive variables is at most $k+1$.
\end{proof}
Note that the red and blue coverage can only increase while rounding up the fractional variables to $1$. If there is exactly one fractional variable, we round it up, giving a $2$-approximation using exactly $k$ centers. Otherwise, let $x^*_1, x^*_2$ be the two fractional variables, corresponding to clusters $C_1, C_2$ respectively. Note that, although the fractional solution satisfies both red and blue coverage, adding only $C_1$ or $C_2$ to the solution may not satisfy coverage demands for both colors -- recall Example \ref{ex:fractional}. Therefore, we include both clusters in the solution, resulting in a pseudo-approximation of cost at most $2\rho$, using at most $k+1$ centers. We summarize the result of this section in the following theorem.

\begin{theorem}
	There exists a polynomial time algorithm to find a $2$-approximation for the colorful $k$-center problem (with two colors) in any metric space, using at most $k+1$ centers.
\end{theorem}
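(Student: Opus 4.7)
The plan is to assemble the ingredients developed throughout the section into a single algorithm, and argue correctness by appealing to the claims and lemmas already proved. First, I would enumerate candidate values of $\rho$: since the optimal cost equals one of the $O(n^2)$ interpoint distances, I try each such value, run the procedure below, and return the cheapest feasible solution found. When $\rho \geq OPT$, the integral optimum certifies that the LP in Figure \ref{fig:FeasibilityLP} is feasible, so fixing $\rho = OPT$ is the key case to analyze.

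For this $\rho$, I would solve the LP in Figure \ref{fig:FeasibilityLP} to obtain a fractional feasible $(x', z')$, then run Algorithm \ref{alg:clustering} on it to produce the cluster centers $S$, the clusters $\{C_i\}_{i \in S}$ of radius $2\rho$, and the modified vector $(\tilde{x}, \tilde{z})$. Claims \ref{claim:rb-coverage} and \ref{claim:fractional-k} then immediately give that $\tilde{x}$ restricted to $S$ is feasible for the Simplified LP in Figure \ref{fig:SimplifiedLP} with objective value at least $r$. In polynomial time I would compute an extreme point optimal solution $x^*$ of the Simplified LP, which therefore also has objective at least $r$ and satisfies the blue coverage requirement.

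By Lemma \ref{lemma:pseudo-approx}, $x^*$ has at most $k+1$ positive coordinates, of which at most two are strictly fractional. I would round every positive $x^*_i$ up to $1$ and output the corresponding set of centers $\{i \in S : x^*_i > 0\}$, with each $i$ serving as the center of a ball of radius $2\rho$. Rounding positive coordinates up can only increase $\sum_i r_i x^*_i$ and $\sum_i b_i x^*_i$, so the red and blue coverage constraints remain satisfied; by Property 1 of Observation \ref{obs:clustering} the clusters $C_i$ are disjoint, so ``coverage'' in the LP sense coincides with coverage by the chosen balls. The resulting feasible solution has cost at most $2\rho = 2 \cdot OPT$ and uses at most $k+1$ centers, as required.

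The only step that required real work was Lemma \ref{lemma:pseudo-approx}, which is where the $k+1$ (as opposed to $k$) comes from and where Example \ref{ex:fractional} prevents a clean rounding; everything else in my proposal is a matter of stitching the claims together and handling the standard guessing of $\rho$. I do not see a way to avoid the extra center with this LP, so I expect that removing it would demand a genuinely different rounding argument rather than a tighter analysis of this one.
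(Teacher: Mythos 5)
Your proposal is correct and follows essentially the same route as the paper: guess $\rho$, solve the feasibility LP, run the greedy clustering to get $(\tilde{x}, \tilde{z})$, feed it into the Simplified LP, take an extreme point optimum, and round up the (at most two) fractional coordinates using Lemma~\ref{lemma:pseudo-approx}. The only cosmetic difference is that you phrase the last step as rounding up all positive coordinates rather than just the fractional ones, which is the same operation.
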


This theorem generalizes readily to an arbitrary number of color classes $c \geq 2$. Having \(c\) color classes corresponds to having \(c-1\) constraints instead of constraint (\ref{constr:blue-cover1}) in the corresponding Simplified \LP. As we now have constraints corresponding to $c - 1$ color classes, we obtain the following lemma for the Simplified \LP, the proof of which goes along the same lines as that of Lemma \ref{lemma:pseudo-approx}.

\begin{lemma} \label{lemma:multi-pseudo-approx}
  The number of $x^*$-variables in an extreme point optimal solution that are strictly fractional, is at most $c$. Therefore, the number of strictly positive variables is at most $k+c-1$.
\end{lemma}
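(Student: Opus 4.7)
The plan is to mirror the proof of Lemma~\ref{lemma:pseudo-approx} almost verbatim, with the only change being the count of non-box constraints in the extended Simplified \LP. First I would write down the natural $c$-color version of the Simplified \LP: the objective is to maximize $\sum_{i\in S} r_{i} x_i$ (where $r_i$ now denotes the count of points of the ``objective'' color class inside $C_i$), the coverage constraint~(\ref{constr:blue-cover1}) is replaced by $c-1$ analogous coverage constraints, one per non-objective color class, constraint~(\ref{constr:atmost-k1}) remains the budget $\sum_{i\in S} x_i \le k$, and~(\ref{constr:fractional-x}) remains the box constraints $x_i\in[0,1]$. Feasibility of $(\tilde x)$ and the lower bound $\ge r$ on the optimum follow from the multi-color analogues of Claims~\ref{claim:rb-coverage} and~\ref{claim:fractional-k}, which go through unchanged because $\{R_i\}_{i\in S}$ partitions each color class.

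Next I would take an extreme point optimal solution $x^*$ and invoke Lemma~\ref{lemma:ranklemma}. Let $m \coloneqq |S|$. The total number of non-box constraints in the \LP is now $(c-1)+1 = c$, so at most $c$ of the tight constraints can come from outside the box constraints $(\ref{constr:fractional-x})$. Since the rank lemma forces $m$ linearly independent tight constraints, at least $m-c$ of them must be box constraints, meaning at least $m-c$ of the variables are set to either $0$ or $1$. Hence at most $c$ variables of $x^*$ are strictly fractional, which is the first half of the lemma.

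For the second half I would do the same case split as before. If there are exactly $k$ variables equal to $1$, then the budget constraint~(\ref{constr:atmost-k1}) is tight and, by the argument above, there are no strictly fractional variables, giving at most $k \le k+c-1$ positive variables. Otherwise, the number of variables equal to $1$ is at most $k-1$; combined with the at most $c$ strictly fractional variables, the total number of strictly positive variables is at most $(k-1)+c = k+c-1$.

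The only conceptual point that needs any care is the constraint count, where the budget plus the $c-1$ coverage constraints give exactly $c$ non-box constraints, hence $c$ potentially fractional variables; apart from that, no step of the $c=2$ proof requires modification, so I do not anticipate any real obstacle.
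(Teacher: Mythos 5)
Your proof is correct and takes essentially the same approach the paper indicates (which simply states that the argument mirrors that of Lemma~\ref{lemma:pseudo-approx}): count the $c$ non-box constraints (budget plus $c-1$ coverage constraints), apply the rank lemma to force at least $m-c$ tight box constraints, and run the same case split on whether $k$ variables are already at $1$.
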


Therefore, if we round up these fractional variables, we get a pseudo-approximation of cost at most \(2 \cdot \rho \), using at most \(k+c-1\) centers.

\section{A Constant Approximation Algorithm in \(\mathbb{R}^2\)} \label{sec:geometric}
In this section, we describe a constant approximation algorithm for the colorful \(k\)-center problem with two color classes, red and blue, where the set of points lies in the Euclidean plane. Recall that $P = R \sqcup B$ denotes the input set of $n$ points, and $r$ and $b$ the red and blue coverage requirements. Our algorithm consists of two subroutines which we describe in the following two subsections. The two subroutines are intended to handle two different types of instances. In order to describe these instances we need the following definitions.
\begin{definition}
  Let $\alpha > 0$ be a parameter. A set \(S\) of balls of radius \(\rho'\) is \(\alpha\)-separated if the distance between the centers of every two balls in \(S\) is greater than \(\alpha\cdot\rho'\). An instance of colorful $k$-center is \(\alpha\)-separated if the set of balls in some optimal solution is \(\alpha\)-separated.
\end{definition}

The first subroutine (described in Section \ref{sec:non-separated}) gives a \(2 (\alpha + 1)\)-approximation algorithm for instances that are not \(\alpha\)-separated. The second subroutine (described in Section \ref{sec:perfect-matching}) gives a \((0.5 \alpha + 2)\)-approximation algorithm for \(\alpha\)-separated instances where \(\alpha > 4/\gamma\) for some absolute constant \(\gamma > 0\), which is defined below.


Therefore for a large enough constant \(\alpha\), we get a \(\max\{2 (\alpha + 1), 0.5 \alpha + 2\}\)-approximation algorithm for the colorful \(k\)-center problem with two color classes. From the geometric arguments in Section \ref{sec:perfect-matching} that determine $\alpha$, it is apparent that taking $\alpha = 8+\epsilon'$ for any $\epsilon' > 0$ is sufficient. Therefore, the approximation guarantee of our algorithm is $17+\epsilon$ for any $\epsilon > 0$. 

\subsection{Handling Non-Separated Instances}
\label{sec:non-separated}
If an instance is not \(\alpha\)-separated, we can use the pseudo-approximation algorithm of Section \ref{sec:pseudo-appx} to immediately get a \(2 (\alpha + 1)\)-approximation algorithm. This is formalized in the following lemma. It is worth pointing out that this subroutine does not require the set of points to be in \(\mathbb{R}^2\) -- it works for any metric.

\begin{lemma} \label{lemma:separability}
  If an instance is not \(\alpha\)-separated then we get a \(2 (\alpha + 1)\)-approximate solution to the colorful \(k\)-center problem in polynomial time.
\end{lemma}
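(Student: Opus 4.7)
The plan is to reduce a non-$\alpha$-separated instance of colorful $k$-center to an instance of colorful $(k{-}1)$-center on the same point set, and then invoke the pseudo-approximation of Section~\ref{sec:pseudo-appx} with budget $k-1$.

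Let $\rho$ denote the optimal cost of the given instance. Since the instance is not $\alpha$-separated, every optimal solution $\{B(c_1,\rho),\ldots,B(c_k,\rho)\}$ contains two centers $c_i, c_j$ with $d(c_i,c_j) \le \alpha\rho$. By the triangle inequality, $B(c_i,(\alpha+1)\rho) \supseteq B(c_i,\rho) \cup B(c_j,\rho)$: for any $x \in B(c_j,\rho)$, $d(c_i,x) \le d(c_i,c_j)+d(c_j,x) \le (\alpha+1)\rho$. Replacing these two balls by this single enlarged ball yields a collection of $k-1$ balls, each of radius at most $(\alpha+1)\rho$, that still covers at least $r$ red and $b$ blue points. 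In other words, the optimum of the colorful $(k{-}1)$-center problem on the same input is at most $(\alpha+1)\rho$.

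Now invoke the pseudo-approximation of Section~\ref{sec:pseudo-appx}, but with the budget parameter set to $k-1$ in place of $k$. Applied to any colorful $k'$-center instance, that routine outputs a solution with at most $k'+1$ centers whose cost is at most twice the optimum of that instance. Setting $k' = k-1$ and combining with the previous observation, we obtain $k$ centers of cost at most $2(\alpha+1)\rho$, which is the desired $2(\alpha+1)$-approximation with exactly $k$ centers. Since the pseudo-approximation is metric-agnostic, this subroutine also works in an arbitrary metric space, as promised.

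Algorithmically, $\rho$ is unknown in advance. As in Section~\ref{sec:pseudo-appx}, we enumerate the $O(n^2)$ interpoint distances as candidate guesses for the optimal radius of the $(k{-}1)$-center instance, run the feasibility \LP (with budget $k-1$) and the clustering/rounding of Section~\ref{sec:pseudo-appx} for each guess, and return the cheapest feasible output. The smallest guess for which the \LP is feasible is at most $(\alpha+1)\rho$, so the rounding produces $k$ balls of radius at most $2(\alpha+1)\rho$. There is no substantive obstacle in this argument: the single conceptual step is the merging of two close optimal balls into a single enlarged super-ball, and the factor-$2$ loss inherent to the pseudo-approximation is precisely what makes the overall ratio $2(\alpha+1)$ rather than $(\alpha+1)$.
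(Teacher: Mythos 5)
Your proof is correct and takes essentially the same approach as the paper: merge two close optimal balls into a single ball of radius $(\alpha+1)\rho$, observe this yields a feasible $(k{-}1)$-ball solution of cost $(\alpha+1)\rho$, and then invoke the pseudo-approximation with budget $k-1$ to get $k$ balls of cost $2(\alpha+1)\rho$. The only difference is cosmetic: you spell out the triangle-inequality containment and the enumeration of candidate radii, both of which the paper leaves implicit.
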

\begin{proof}
  Let the optimal radius be \(\rho\). Since the instance is not \(\alpha\)-separated, there are two balls in some optimal solution whose centers are within distance \(\alpha \cdot \rho\) of each other. Let \(C_1\) and \(C_2\) be two such balls in the optimal solution. We replace \(C_1\) with a ball of radius \((\alpha + 1)\rho\) centered at the same point as \(C_1\). This allows us to remove \(C_2\) without violating any of the coverage requirements (since the new ball replacing \(C_1\) covers all points originally covered by \(C_2\)). This means that there exists a feasible solution using \(k-1\) centers with cost \((\alpha + 1)\rho\). Therefore, if we run the pseudo-approximation algorithm of the previous section with number of centers being \(k-1\), we will get a solution using at most \(k\) centers having cost \(2 (\alpha + 1)\rho\), which proves the lemma.
\end{proof}

\subsection{Reduction of Separated Instances to Exact Perfect Matching}
\label{sec:perfect-matching}
Let us assume that the instance we are given is \(\alpha\)-separated for some \(\alpha > 4/\gamma\) for some absolute constant \(\gamma > 0\). From the proof of Lemma \ref{lemma:extended-Voronoi-intersection}, it will be clear that taking $\alpha > 8$ is sufficient. The $\alpha$-separability of the instance helps us design a \((0.5\alpha + 2)\)-approximation algorithm for this problem in \(\mathbb{R}^2\). We do this by reducing the problem to the \emph{Exact Perfect Matching} problem \cite{papadimitriou1982complexity,berger2011budgeted}. In the exact perfect matching problem we are given an edge-weighted graph \(G = (V, E)\) and a target weight \(W\). The goal is to find a perfect matching in \(G\) having weight \emph{exactly} \(W\). The result in \cite{camerini1992random,mulmuley1987matching} gives a randomized pseudo-polynomial algorithm for exact perfect matching. In other words, their algorithm runs in polynomial time if the largest edge weight in the input graph is bounded by a polynomial in \(|V|\). We now describe our reduction from the $k$-center problem to exact perfect matching.

We first assume that we have guessed correctly the radius $\rho$ of an optimal solution to the given instance. We cover \(\mathbb{R}^2\) with a grid of equilateral triangles of side length \(\ell = 0.5 \alpha \rho\). See Figure \ref{fig:grid-graph} for an illustration. Consider the following three lines through the origin: \(L_1\) is the \(x\)-axis, \(L_2\) has angle \(60\) degrees with \(L_1\), and \(L_3\) has angle \(120\) degrees with \(L_1\). The grid of side length \(\ell\) can be formally defined as the arrangement of the collection of lines parallel to \(L_1\), \(L_2\), and \(L_3\), with two adjacent parallel lines having distance \(\sqrt{3}\ell/2\). We can interpret the grid as an infinite graph where the edges are the sides of each atomic triangle in the grid and the vertices are the intersection points of the edges.

Define the \emph{Voronoi region of an edge} \(e\) as the set of points that are at least as close to \(e\) as any other edge. It is easy to see that the Voronoi region of \(e\) is a rhombus whose four end points are the two end points of \(e\) and the centroids of the two triangles sharing the edge \(e\). We are interested in the \emph{extended Voronoi region of an edge} \(e\) which is the Minkowski sum of the Voronoi region of \(e\) and a ball of radius \(\rho\) centered at the origin. 


\begin{figure}[h]
  \centering
  \includegraphics{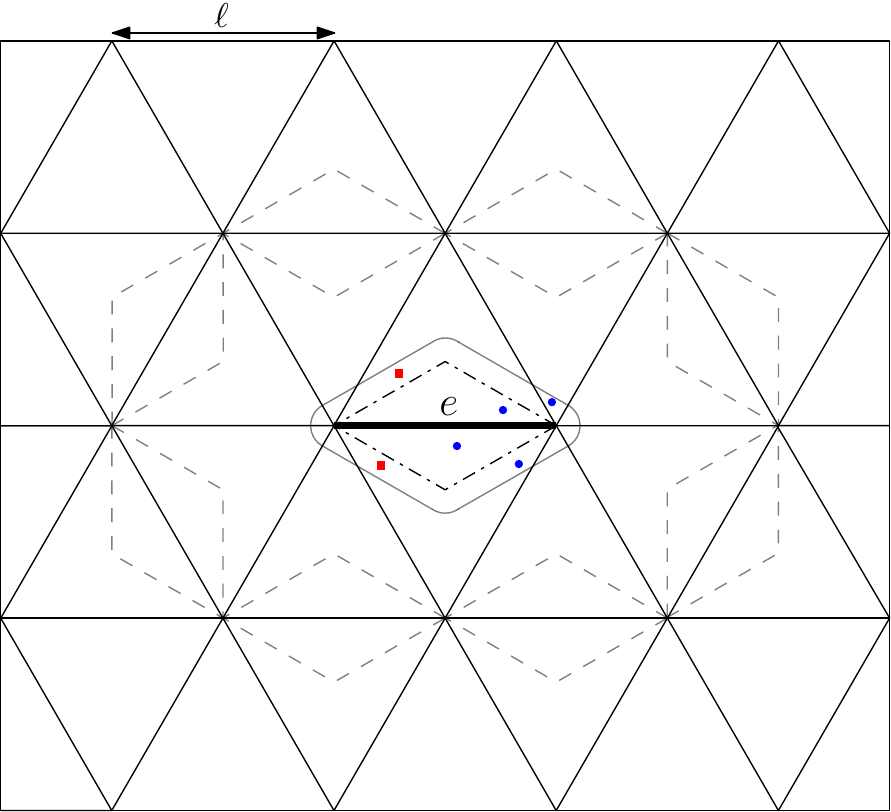}
  \caption{Here, we show a triangular grid of side $\ell = 0.5\alpha \rho$. We also show the Voronoi region of the edge $e$ (the area inside the dash-dotted rhombus around $e$), and the extended Voronoi region of $e$ (the area inside the lightly shaded object). We also show the Voronoi regions of the ``neighboring edges'' that do not share a vertex with $e$ as dashed rhombuses. Notice that the Voronoi regions (resp.\ extended Voronoi regions, not shown for simplicity) of the neighboring edges do not intersect with the Voronoi region (resp.\ extended Voronoi region) of $e$. We use this fact in the proof of Lemma \ref{lemma:extended-Voronoi-intersection}. Notice that, the extended Voronoi region of $e$ contains $2$ red points and $4$ blue points. Therefore, we will set the weight of the edge to be $4 \cdot N^2 + 2$. \label{fig:grid-graph}}
\end{figure}

\begin{lemma}\label{lemma:extended-Voronoi-intersection}
  The extended Voronoi regions of two edges \(e\) and \(e'\) intersect each other iff \(e\) and \(e'\) share a common vertex.
\end{lemma}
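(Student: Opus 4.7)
The plan is to prove the two directions separately; the $(\Leftarrow)$ direction is essentially by inspection, while the $(\Rightarrow)$ direction requires showing that the Voronoi regions of non-adjacent edges are far enough apart to absorb the $\rho$-inflation.

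For the easy direction, suppose $e$ and $e'$ share a common vertex $v$. Since each Voronoi region is the rhombus whose four vertices are the two endpoints of the corresponding edge together with the centroids of its two incident triangles, $v$ is a vertex of both $V(e)$ and $V(e')$. Hence $v$ lies in both extended regions, which therefore intersect (indeed at the single point $v$ plus a neighborhood of it).

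For the other direction, I would first observe that since the extended Voronoi region is $V(e) \oplus B(0,\rho)$, two extended regions intersect iff the Euclidean distance $\mathrm{dist}(V(e),V(e'))$ is at most $2\rho$. So it suffices to prove that for non-adjacent $e,e'$ the distance between the rhombuses $V(e),V(e')$ is strictly greater than $2\rho$. Using the symmetry group of the triangular grid (60$^\circ$-rotations, reflections, and lattice translations, under which all edges are equivalent), I may normalize $e$ to be the horizontal edge with endpoints $(\pm \ell/2, 0)$, so $V(e)$ is the rhombus with vertices $(\pm \ell/2, 0)$ and $(0, \pm \ell/(2\sqrt{3}))$, cut out by the lines $\pm x + \sqrt{3}|y| = \ell/2$. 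The critical case is the horizontal edge $e'$ with endpoints $(0, \ell\sqrt{3}/2)$ and $(\ell, \ell\sqrt{3}/2)$ (and its three mirror images). A direct computation shows that the upper-right side of $V(e)$ lies on the line $x + \sqrt{3}y = \ell/2$, while the lower-left side of $V(e')$ lies on the parallel line $x + \sqrt{3}y = 3\ell/2$; the perpendicular distance between these parallel lines is exactly $\ell/2$, and dropping the perpendicular from an interior point of $V(e)$'s side (for instance from $(0,\ell/(2\sqrt{3}))$) lands on $V(e')$'s side, so this distance is realized between the two rhombuses.

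The main obstacle is ruling out that some other non-adjacent $e'$ produces a smaller distance. I would group such $e'$ into a finite number of symmetry classes (horizontal edges at larger offsets; edges at $60^\circ$ and $120^\circ$ to $e$ that do not share a vertex with $e$; and edges further away in the grid) and do the corresponding distance computation in each class. In every other class the distance turns out to be at least $\ell/\sqrt{3}>\ell/2$, essentially because either the two nearest rhombus sides are no longer parallel (so the relevant closest pair is vertex-to-vertex and further away), or the midpoints of $e$ and $e'$ are separated by more than $\ell\sqrt{3}/2$ times the rhombus's inradius-like parameter. Combining this with the hypothesis $\alpha>8$, which gives $\ell/2 = \alpha\rho/4 > 2\rho$, yields $\mathrm{dist}(V(e),V(e'))>2\rho$ for every non-adjacent pair, so the extended Voronoi regions are disjoint as required.
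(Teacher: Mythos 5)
Your proposal is correct and takes essentially the same approach as the paper: both prove the reverse direction by noting a shared vertex lies in both rhombuses, and both prove the forward direction by showing the Voronoi regions of non-adjacent edges are separated by distance $\gamma \ell$ with $\gamma = 1/2$, so that $\alpha > 8$ forces $\ell/2 = \alpha\rho/4 > 2\rho$ and the $\rho$-inflations remain disjoint. The only difference is one of detail: the paper asserts the existence of the constant $\gamma$ and relegates $\gamma = 1/2$ to a footnote as a ``simple geometric calculation,'' whereas you explicitly normalize $e$, identify the critical parallel-offset edge, and compute the $\ell/2$ gap (your remaining case-check is likewise sketched, matching the paper's level of rigor on that point).
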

\begin{proof}
  The reverse direction of the lemma is trivial, so we focus on the forward direction. Take two edges \(e\) and \(e'\) that are not incident on a common vertex. Note that their Voronoi regions do not intersect (see Figure \ref{fig:grid-graph}). Thus, there is an absolute constant \(\gamma > 0\) such that the distance between the closest pair of points, one in each region, is at least \(\gamma \cdot \ell = \gamma \cdot 0.5 \alpha \rho\). Therefore, if \(\alpha > 4/\gamma\), the distance between the two Voronoi regions is greater than \( 2 \rho\). \footnote{With simple geometric calculations, one can show that $\gamma = 1/2$, which implies that $\alpha > 8$.} Thus, the extended Voronoi regions of \(e\) and \(e'\) do not intersect because each Voronoi region ``expands'' by only an additive factor of \(\rho\).
\end{proof}

For each edge define \(P(e)\) as the set of points in \(P\) that lie in the extended Voronoi region of \(e\). Let \(G'=(V', E')\) be the graph where \(E'\) is the set of all grid edges such that \(P(e) \neq \emptyset\), and \(V'\) is the set consisting of the endpoints of edges in \(E'\). Define \(N = 2\max\{n, |V'|\}\), note that \(N = O(n)\) because each point can belong to at most six triangles. The weight of an edge \(e \in E'\) is set to \(w_e = N^2 \cdot b_e + r_e\) where \(b_e\) and \(r_e\) are the number of blue and red points respectively in \(P(e)\).

\begin{lemma}\label{lemma:opt-is-matching}
  There exists a matching in \(G'\) with \(k\) edges and with weight exactly \(b' \cdot N^2 + r'\) for some \(|B| \geq b' \ge b\) and \(|R| \geq r' \ge r\).
\end{lemma}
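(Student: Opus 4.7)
The plan is to build the matching directly from an $\alpha$-separated optimal solution. Let $B_1,\ldots,B_k$ be the optimal balls and $c_1,\ldots,c_k$ their centers, so $d(c_i,c_j)>\alpha\rho$ for all $i\neq j$. For each $i$ I would pick a grid edge $e_i$ whose (closed) Voronoi region contains $c_i$, breaking ties arbitrarily. Since $B_i$, the disk of radius $\rho$ about $c_i$, is contained in the Minkowski sum of that Voronoi region with the disk of radius $\rho$---which is exactly the extended Voronoi region of $e_i$---every input point covered by $B_i$ lies in $P(e_i)$. Assuming (without loss of generality) that every $B_i$ covers at least one input point, we then have $e_i\in E'$.

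The first step is to show that $\{e_1,\ldots,e_k\}$ is a matching in $G'$. Recall that the Voronoi region of a grid edge $e$ is a rhombus whose vertex set consists of the two endpoints of $e$ and the two centroids of the triangles incident to $e$; each centroid lies at distance $\ell/\sqrt{3}<\ell$ from both endpoints of $e$, and the two endpoints are at mutual distance $\ell$. By convexity of the rhombus, every point in it lies within distance $\ell$ of each endpoint of $e$. Hence if $e_i$ and $e_j$ shared a common vertex $v$, we would get $d(c_i,c_j)\le d(c_i,v)+d(v,c_j)\le 2\ell=\alpha\rho$, contradicting $\alpha$-separation.

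The second step is to compute the weight. By Lemma~\ref{lemma:extended-Voronoi-intersection}, the fact that no two $e_i$'s share a vertex implies that their extended Voronoi regions are pairwise disjoint, so the sets $P(e_i)$ are pairwise disjoint. Setting $r':=\sum_{i=1}^k r_{e_i}$ and $b':=\sum_{i=1}^k b_{e_i}$, the matching has total weight $\sum_i(N^2\cdot b_{e_i}+r_{e_i})=N^2b'+r'$. Because the points covered by $B_i$ all lie in $P(e_i)$ and the $P(e_i)$'s are disjoint, the red (respectively blue) points covered by the optimal solution contribute at most $r'$ (respectively $b'$); feasibility of the optimum then yields $r'\ge r$ and $b'\ge b$, while $r'\le|R|$ and $b'\le|B|$ are immediate.

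The main obstacle is the elementary geometric observation in the first step: one must verify that the entire Voronoi rhombus about $e$ sits inside the closed disk of radius $\ell$ around each endpoint of $e$. This reduces to checking the four vertices of the rhombus, all of which are within $\ell$ of either endpoint, whereupon convexity finishes the argument. A minor technicality worth noting is the assumption that each $B_i$ covers at least one point (so $e_i\in E'$); if the chosen $\alpha$-separated optimum has empty balls, one may perturb them, or drop them and pad the matching with arbitrary free edges of $G'$ since empty balls contribute nothing to coverage.
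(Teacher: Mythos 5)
Your proof is correct and takes essentially the same approach as the paper: map each optimal ball to the grid edge whose Voronoi rhombus contains its center, argue via $\alpha$-separation that the chosen edges are pairwise non-adjacent, and read off the weight. The extra detail you supply (checking the four rhombus vertices and invoking convexity to bound the Voronoi region within distance $\ell$ of each endpoint of $e$) is exactly the calculation the paper leaves implicit in its ``$0.5\alpha\rho + 0.5\alpha\rho = \alpha\rho$'' step. One small unneeded worry: since centers are required to be points of $P$, every ball $C_i$ contains at least its own center, so $P(e_i)\neq\emptyset$ holds automatically and the empty-ball case you flag cannot arise.
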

\begin{proof}
  Let \(S^*\) be the set of balls in an optimal solution that is \(\alpha\)-separated. For each ball \(C_i \in S^*\), \(1 \le i \le k\), let \(e_i\) be any edge (from the infinite grid) such that the center of \(C_i\) lies in the Voronoi region of \(e_i\). Note that the extended Voronoi region of $e_i$ contains $C_i$, and thus $e_i \in E'$.  We claim that if $i \neq j$, then (a) $e_i \neq e_j$, and (b) $e_i$ and $e_j$ are not incident on a common vertex of $V'$. If either (a) or (b) does not hold, the distance between the centers of $C_i$ and $C_j$ is at most \(0.5 \alpha \rho + 0.5 \alpha \rho = \alpha \rho\), which violates the fact that the instance is \(\alpha\)-separated. Therefore the set \(M = \{e_i | 1 \le i \le k\}\) is a matching of \(k\) edges in \(G\).

  If we set \(b'\) and \(r'\) to be the total number of blue and red points in the extended Voronoi regions of the \(k\) edges, then we have a matching of weight \(b' \cdot N^2 + r'\) in \(G\). Since the extended Voronoi regions cover all the points covered by \(S^*\), we have \(b' \ge b\) and \(r' \ge r\) which finishes the proof.
\end{proof}

We now define a graph \(G\) such that perfect matchings in $G$ correspond to matchings with $k$ edges in $G'$. To construct \(G = (V, E)\), we start with \(G'\) and add \(|V'| - 2k\) auxiliary vertices that are connected to all vertices of \(G'\) with edges of weight zero. We can do this because Lemma \ref{lemma:opt-is-matching} implies that \(|V'| \ge 2k\). Now \(G\) obeys the following property that is easy to prove.

\begin{observation}\label{obs:perfect-matching}
  Any matching in \(G'\) of weight \(W\) having exactly \(k\) edges can be extended to a perfect matching in \(G\) of weight exactly \(W\). Conversely, from any perfect matching in $G$ with weight exactly $W$, we can obtain a matching in $G'$ with $k$ edges and weight exactly $W$.
\end{observation}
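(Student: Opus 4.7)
The statement is essentially a bookkeeping claim about how the auxiliary vertices and weight-zero edges of $G$ interact with matchings in $G'$, so my plan is to prove the two directions separately and rely almost entirely on counting vertices and edges. The key structural facts I will use are: (i) $G$ has $|V| = |V'| + (|V'| - 2k) = 2(|V'| - k)$ vertices, so a perfect matching in $G$ has exactly $|V'| - k$ edges; (ii) auxiliary vertices are adjacent only to vertices of $V'$ (not to each other) and all such edges have weight $0$; (iii) Lemma \ref{lemma:opt-is-matching} guarantees $|V'| \geq 2k$, so the auxiliary set is nonempty only in a well-defined sense and the construction makes sense.

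For the forward direction, I will start with a matching $M \subseteq E'$ of size exactly $k$ and weight $W$. This matching saturates $2k$ vertices of $V'$, leaving a set $U \subseteq V'$ of $|V'| - 2k$ unsaturated vertices. Since $G$ contains $|V'| - 2k$ auxiliary vertices, each adjacent to every vertex of $V'$ by a weight-zero edge, I can pair up the auxiliary vertices with the vertices of $U$ arbitrarily in a one-to-one fashion. Adding these $|V'| - 2k$ weight-zero edges to $M$ yields a set of $k + (|V'|-2k) = |V'|-k$ pairwise vertex-disjoint edges covering every vertex of $G$, i.e.\ a perfect matching, whose total weight is $W + 0 = W$.

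For the converse, I will take a perfect matching $\widehat{M}$ of $G$ with weight exactly $W$ and partition its edges into those incident to an auxiliary vertex and those entirely inside $V'$. Because auxiliary vertices form an independent set in $G$ (they are adjacent only to $V'$), every auxiliary vertex is matched to a distinct vertex of $V'$; hence exactly $|V'| - 2k$ edges of $\widehat{M}$ are auxiliary-incident, and each of these has weight $0$. Removing them leaves a matching $M' \subseteq E'$ that saturates the remaining $|V'| - (|V'|-2k) = 2k$ vertices of $V'$, so $|M'| = k$. The weight calculation is immediate: $w(M') = w(\widehat{M}) - 0 = W$.

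There isn't really a hard step here; the only thing one has to be careful about is that the auxiliary vertices cannot be matched among themselves (so every auxiliary vertex consumes exactly one $V'$ vertex), and that $|V'| - 2k \geq 0$ so the construction of $G$ is valid, which is exactly what Lemma \ref{lemma:opt-is-matching} ensures. Once these are noted, both directions reduce to counting.
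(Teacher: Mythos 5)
Your proof is correct, and since the paper declares the observation ``easy to prove'' without spelling out an argument, the straightforward counting argument you give is precisely what was intended: pad a size-$k$ matching with weight-zero auxiliary edges to the $|V'|-2k$ unsaturated vertices of $V'$, and conversely strip exactly those auxiliary edges away, noting that auxiliary vertices form an independent set so each consumes one vertex of $V'$. Nothing is missing; the one subtle point you correctly flag is that $|V'| \geq 2k$ (from Lemma~\ref{lemma:opt-is-matching}) is what makes the construction of $G$ well-defined.
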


Using Lemma \ref{lemma:opt-is-matching} and Observation \ref{obs:perfect-matching} we get the following Corollary.

\begin{corollary}\label{cor:opt-is-matching}
  There exists a perfect matching in \(G\) of weight exactly \(b' \cdot N^2 + r'\) for some \(|B| \geq b' \ge b\) and \(|R| \geq r' \ge r\).
\end{corollary}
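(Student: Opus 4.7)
The plan is straightforward, since this corollary is really just the composition of the two preceding results with no new content to establish. First I would invoke Lemma \ref{lemma:opt-is-matching} to obtain a matching $M \subseteq E'$ in $G'$ consisting of exactly $k$ edges, whose total weight is $b' \cdot N^2 + r'$ for some values satisfying $b \le b' \le |B|$ and $r \le r' \le |R|$ (the upper bounds are part of the statement of that lemma, since $b'$ and $r'$ count distinct points from $B$ and $R$ respectively).

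Second, I would apply the forward direction of Observation \ref{obs:perfect-matching}, which asserts that any $k$-edge matching in $G'$ of weight $W$ extends to a perfect matching in $G$ of weight exactly $W$. Applied to $M$, this yields a perfect matching $M^\star$ in $G$ of weight exactly $b' \cdot N^2 + r'$, with $b'$ and $r'$ in the required ranges. This is precisely what the corollary asserts.

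Since the work has already been done in Lemma \ref{lemma:opt-is-matching} and Observation \ref{obs:perfect-matching}, there is no real obstacle here; the one thing worth a brief sanity check is that the extension to a perfect matching is actually realizable. That is guaranteed by the construction of $G$: one adjoins $|V'| - 2k$ auxiliary vertices, each connected to every vertex of $V'$ by a zero-weight edge, and this number is non-negative thanks to $|V'| \ge 2k$ (a consequence of Lemma \ref{lemma:opt-is-matching}). The $|V'| - 2k$ vertices of $V'$ left unmatched by $M$ can then be paired arbitrarily with these auxiliary vertices, producing a perfect matching whose weight agrees with that of $M$ because the extending edges contribute zero.
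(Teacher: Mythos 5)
Your proof is correct and mirrors the paper's: the paper simply states that the corollary follows by combining Lemma~\ref{lemma:opt-is-matching} with the forward direction of Observation~\ref{obs:perfect-matching}, exactly as you do. Your extra sanity check that the extension is realizable (pairing the $|V'|-2k$ unmatched vertices with auxiliary vertices via zero-weight edges) is a sound elaboration of what the paper leaves implicit in its construction of $G$.
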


We now show how to go from a perfect matching in $G$ to a cover for the input instance.

\begin{lemma} \label{lemma:matching-to-kcenter}
  If \(G\) has a perfect matching of weight exactly \(W = b' N^2 + r'\), with $0 \leq b' \leq |B|$ and $0 \leq r' \leq |R|$, then we can place exactly \(k\) balls of radius \((0.5\alpha + 2)\rho\) that cover at least \(r'\) red points and at least \(b'\) blue points.
\end{lemma}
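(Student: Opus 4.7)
The plan is to translate the perfect matching of weight $W$ directly into a placement of $k$ balls of radius $(0.5\alpha+2)\rho$ centered at points of $P$. First I would invoke Observation~\ref{obs:perfect-matching} to extract from the given perfect matching a matching $M'$ in $G'$ with exactly $k$ edges whose total weight equals $W$. Since each edge of $G'$ has weight $w_e = N^2 b_e + r_e$, we obtain
\[
W \;=\; N^2\!\sum_{e \in M'} b_e \;+\; \sum_{e \in M'} r_e.
\]
Because each $r_e \le n$ and $|M'| = k \le n$, we have $\sum_{e \in M'} r_e \le n^2 < N^2$ (by the choice $N \ge 2n$), and likewise $r' < N^2$; the analogous bound holds in blue. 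The base-$N^2$ representation $W = b' N^2 + r'$ is therefore unique, forcing $\sum_{e \in M'} b_e = b'$ and $\sum_{e \in M'} r_e = r'$.

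Next, for each $e \in M'$ I would choose an arbitrary $c_e \in P(e)$; this is possible because $E'$ by construction contains only edges with $P(e) \neq \emptyset$, and automatically $c_e \in P$. The proposed solution is the collection of $k$ balls of radius $(0.5\alpha+2)\rho$ centered at these $c_e$'s. To conclude, I need to verify that (i) the ball around $c_e$ covers all of $P(e)$, and (ii) the sets $\{P(e)\}_{e \in M'}$ are pairwise disjoint.

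For (i), I would bound the diameter of the extended Voronoi region of $e$. The Voronoi rhombus has diagonals of length $\ell$ (its two $e$-endpoints) and $\sqrt{3}\,\ell/3$ (its two triangle centroids), so its diameter equals $\ell$; taking the Minkowski sum with a disk of radius $\rho$ raises the diameter by exactly $2\rho$, yielding $\ell + 2\rho = (0.5\alpha+2)\rho$. Thus any ball of this radius centered at a point of the extended Voronoi region covers the entire region, and in particular all of $P(e)$. For (ii), since $M'$ is a matching, no two of its edges share a vertex, so Lemma~\ref{lemma:extended-Voronoi-intersection} gives that the extended Voronoi regions, and \emph{a fortiori} the point sets $P(e)$, are pairwise disjoint.

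Combining these, the $k$ balls cover a disjoint union $\bigsqcup_{e \in M'} P(e)$, containing exactly $\sum_{e \in M'} r_e = r'$ red points and $\sum_{e \in M'} b_e = b'$ blue points, which proves the lemma. I expect the only delicate step to be the uniqueness of the base-$N^2$ decomposition, but the definition $N = 2\max\{n, |V'|\}$ is exactly tuned so that neither $\sum r_e$ nor $r'$ can reach $N^2$; the remaining verifications are routine geometry on the equilateral triangular grid.
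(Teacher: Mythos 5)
Your proof is correct and follows essentially the same route as the paper's: extract the matching $M'$ via Observation~\ref{obs:perfect-matching}, use the uniqueness of the base-$N^2$ decomposition (guaranteed by the choice of $N$) to recover $\sum_{e\in M'} r_e = r'$ and $\sum_{e\in M'} b_e = b'$, appeal to Lemma~\ref{lemma:extended-Voronoi-intersection} for disjointness of the extended Voronoi regions, and cover each $P(e)$ with one ball of radius $(0.5\alpha+2)\rho$ centered at an arbitrary point of $P(e)$. You supply the explicit diameter computation (rhombus diagonal $\ell$ plus Minkowski expansion $2\rho$) that the paper simply asserts, but the argument is otherwise identical.
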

\begin{proof}

  Suppose $G$ has a perfect matching $M$ of weight $W$. Using Observation~\ref{obs:perfect-matching}, we recover a matching $M'$ in $G'$ with $k$ edges and weight $W$. For $e \in M'$, let $b_e = |P(e) \cap B|$ and $r_e = |P(e) \cap R|$. Thus, the weight \(w_e\) for each edge \(e\) is \(b_{e}N^2 + r_e\). It follows that
  \[ \left( \sum_{e \in M'} b_e \right) N^2 + \left( \sum_{e \in M'} r_e \right) = W = b' N^2 + r'.\]
  By our choice of $N$, both $r'$ and $\sum_{e \in M'} r_e$ are strictly less than $N^2$. It follows that
  $\sum_{e \in M'} r_e = r' = W \text{ mod } N^2$, and furthermore $\sum_{e \in M'} b_e = b'$. Furthermore, since $M'$ is a matching, by Lemma \ref{lemma:extended-Voronoi-intersection}, the extended Voronoi regions of the edges in $M'$ do not intersect. Thus, \(\cup_{e \in M'} P(e)\) contains exactly \(r'\) red points and \(b'\) blue points. 
  

  For each $e \in M'$ such that $P(e) \neq \varnothing$, we place a ball of radius \((0.5\alpha + 2) \rho\) centered at any point in \(P(e)\). These at most $k$ balls cover all points in \(\bigcup_{e \in M'} P(e)\). This is because the distance between any two points in \(P(e)\) for an edge \(e\) is at most \((0.5\alpha + 2) \rho\). This finishes the proof of the lemma.
\end{proof}

We are now ready to state our algorithm for our $\alpha$-separated instance of colorful $k$-center, assuming that we know the optimal radius $\rho$.  We construct the graph $G$ and check if it has a perfect matching of weight exactly $W = b' \cdot N^2 + r'$, for each $b \leq b' \leq |B|$ and $r \leq r' \leq |R|$. For the check, we invoke the algorithm of \cite{camerini1992random}.
Corollary \ref{cor:opt-is-matching} ensures that for at least one of these guesses for \(W\), an exact perfect matching does exist, and the algorithm of \cite{camerini1992random} returns it.\footnote{The algorithm of \cite{camerini1992random} is actually a randomized Monte-Carlo algorithm, and thus has an error probability that can be made arbitrarily small. For the sake of exposition, we ignore this eventuality of error, except in the statement of our final result.} Once we find an exact perfect matching solution for any of these weight values, we convert it into a \((0.5\alpha + 2)\)-approximate solution by invoking the algorithm of Lemma \ref{lemma:matching-to-kcenter}. This completes our algorithm.  


As we do not know the optimal radius for the given input instance, we will run the above algorithm for each choice of $\rho$ from the set of $O(n^2)$ interpoint distances induced by $P$. Fix a particular choice of $\rho$. If there exists a feasible solution with balls of radius $\rho$ that is $\alpha$-separated, then the above algorithm will return a feasible solution with cost at most   \((0.5\alpha + 2) \cdot \rho\). If there is no feasible solution with balls of radius $\rho$ that is $\alpha$-separated, then the above algorithm may not return a feasible cover.

Overall, we simply return the minimum cost feasible solution that we encounter after applying the above algorithm for all choices of $\rho$. This is a \((0.5\alpha + 2)\)-approximation for $\alpha$-separated instances.

\begin{remark}
  We note that the algorithm described here generalizes to any constant number of color classes, in the following sense. Suppose we have an input instance for which there exists an $\alpha$-separated solution of cost $\rho'$, for some $\rho'$. In particular, we do not assume that $\rho'$ is the optimal cost for the input instance. Then the algorithm computes a feasible solution of cost at most   \((0.5\alpha + 2) \cdot \rho'\). Note that in this case, for any edge $e$ of the graph $G'$, the weight of each edge will be of the form \(\sum_{i\ge 0} P_i(e) N^{2i}\) where \(P_i(e)\) is the number of points of color class \(i\) in the extended Voronoi region of that edge.
\end{remark}

\subsection{Combining Separated and Non-Separated Cases} 
We do not know if a given instance of colorful $k$-center is \(\alpha\)-separated or not. Therefore, we do not know which subroutine to use. So we run both the subroutines and return the solution with the lowest cost. We conclude with our main result.

\begin{theorem}
  There is a randomized polynomial time algorithm that, given any instance of colorful $k$-center in the plane with two colors, outputs, with high probability, an $O(1)$-approximate solution.
\end{theorem}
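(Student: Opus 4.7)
The plan is to run both subroutines from Sections~\ref{sec:non-separated} and~\ref{sec:perfect-matching} on the given instance and simply return the cheaper of the two feasible solutions they produce. Fix the separation parameter $\alpha$ to be a constant larger than $8$ (so that $\alpha > 4/\gamma$ with the value $\gamma = 1/2$ noted in the footnote to Lemma~\ref{lemma:extended-Voronoi-intersection}). Since we are allowed to be oblivious to whether the instance is $\alpha$-separated, no guessing step is needed; correctness reduces to a clean dichotomy.

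For each candidate radius $\rho$ drawn from the $O(n^2)$ interpoint distances, we invoke both subroutines. By Lemma~\ref{lemma:separability}, if the instance admits an optimal solution of radius $OPT$ that is \emph{not} $\alpha$-separated, the pseudo-approximation-based subroutine applied at $\rho = OPT$ returns a feasible solution of cost at most $2(\alpha+1) \cdot OPT$ using at most $k$ centers. By Corollary~\ref{cor:opt-is-matching} and Lemma~\ref{lemma:matching-to-kcenter}, if the instance admits an $\alpha$-separated optimum, then, at the correct guess $\rho = OPT$ and for some guess of weight $W = b' N^2 + r'$ with $r \leq r' \leq |R|$ and $b \leq b' \leq |B|$, the matching-based subroutine returns a feasible solution of cost at most $(0.5\alpha + 2) \cdot OPT$. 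Since every instance falls into at least one of these two regimes, taking the minimum over all produced solutions yields a feasible cover of cost at most
\[
\max\bigl\{\, 2(\alpha+1),\; 0.5\alpha + 2 \,\bigr\} \cdot OPT = O(1) \cdot OPT.
\]

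The only randomness enters through the exact perfect matching oracle of~\cite{camerini1992random}, which is Monte Carlo with one-sided error. For each of the $O(n^2)$ guesses of $\rho$ and $O(n^2)$ guesses of $(b', r')$, we run the oracle $\Theta(\log n)$ times so that the failure probability on the ``good'' configuration is at most $1/n^{\Omega(1)}$; a union bound over all invocations bounds the overall failure probability by $1/n^{\Omega(1)}$. Every candidate cover returned can be checked deterministically for feasibility and cost, so oracle failures can only cause us to miss some covers, never to report an infeasible one. Thus, with high probability, the final output is a feasible $O(1)$-approximate solution, and the running time is polynomial throughout.

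There is no real obstacle at this stage: the work lies in the two subroutines, and the theorem is just their combination. The only subtlety worth flagging is that we must run \emph{both} subroutines on every guess of $\rho$ (not attempt to decide $\alpha$-separability in advance), and carefully amplify the randomized matching routine so the union bound across all $\mathrm{poly}(n)$ invocations still yields high-probability success.
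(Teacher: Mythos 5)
Your proposal is correct and takes essentially the same approach as the paper: run the non-separated subroutine (Lemma~\ref{lemma:separability}) and the matching-based subroutine (Corollary~\ref{cor:opt-is-matching}, Lemma~\ref{lemma:matching-to-kcenter}) over all $O(n^2)$ radius guesses and return the cheapest feasible cover found, observing that every instance falls into one of the two regimes. The only difference is that you spell out the amplification and union-bound argument for the Monte Carlo matching oracle of~\cite{camerini1992random}, which the paper relegates to a footnote — this is a welcome bit of extra rigor but not a different proof strategy.
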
  

\section{Multiple Color Classes in the Plane}
\label{sec:multiple}
So far we have only considered two color classes for the sake of keeping the exposition simple. We now sketch an extension to the colorful $k$-center problem in the plane with \(c\) color classes for any integer constant $c$. 




For a given instance in the plane, fix the optimal set of balls $S^*$, and assume it has cost $\rho$. Let $S$ be any maximal $\alpha$-separated subset of $S^*$. That is, for each pair of balls in $S$, the distance between the centers is greater than $\alpha \cdot \rho$, whereas for every ball $B \in S^* \setminus S$, there  is a ball $B' \in S$, such that the distance between the centers of $B$ and $B'$ is at most $\alpha \cdot \rho$. Let $\bar{S} = S^* \setminus S$. We consider two cases:
\paragraph*{Case 1: $|\bar{S}| \geq c - 1$} Notice that if we expand each ball in $S$ to have radius $(\alpha + 1) \cdot \rho$, and discard the balls in $\bar{S}$, we obtain a feasible solution with at most $k - (c-1)$ balls and cost $(\alpha + 1) \cdot \rho$. For this case, running the pseudo-approximation algorithm on the original input, but with the number of centers set to $k - (c-1)$, will give us a solution with $k$ balls and cost $2 (\alpha + 1) \cdot \rho$.

\paragraph*{Case 2: $| \bar{S}| \leq c-2$} We guess the balls in $\bar{S}$. Because $|\bar{S}| \leq c- 2$, and there are $O(n^2)$ choices for $\rho$, we have $O(n^c)$ possibilities to guess from. After guessing $\bar{S}$, we remove the points covered by $\bar{S}$, and decrease the coverage requirement for each color class by the number of points of that color class covered by $\bar{S}$. For this residual instance, $S$ is an $\alpha$-separated solution with $k - |\bar{S}|$ balls and cost $\rho$. If we run the algorithm of Section~\ref{sec:perfect-matching} on this residual instance, allowing $k - |\bar{S}|$ centers, we obtain a cover with $k - |\bar{S}|$ balls and cost at most $(0.5\alpha + 2)\cdot \rho$. We output
the union of this cover and $\bar{S}$. This is a feasible solution to the original problem with $k$ balls and cost $O(\rho)$.

The overall running time of the algorithm is $n^{O(c)}$.

\section{Open Problems}
One question raised by our work is whether an $O(1)$-approximation for colorful $k$-center (with a constant number of colors) can be obtained in $\mathbb{R}^d$ for $d \geq 3$, or indeed in an arbitrary metric space. Another question is whether one can obtain an $O(1)$-approximation for non-uniform $k$-center in the plane (with a constant number of distinct radii). 
\bibliography{references}
\end{document}